\documentclass[aps,onecolumn,twoside,floatfix,pra,a4paper,11pt]{revtex4-2} 

\usepackage{times}
\usepackage{epsfig}
\usepackage{amsfonts}
\usepackage{amsmath}
\usepackage{amssymb,amsthm}
\usepackage{xcolor,colortbl}
\usepackage{multirow}
\usepackage{braket}
\usepackage[normalem]{ulem}
\usepackage{latexsym}
\usepackage{tabularx}
\usepackage{amsfonts}
\usepackage{mathrsfs}
\usepackage{natbib}
\usepackage{verbatim}
\usepackage{gensymb}
\usepackage{caption}
\usepackage{caption}
\usepackage{subcaption}
\usepackage{ragged2e}
\DeclareCaptionJustification{justified}{\justifying}
\usepackage{blkarray}
 \usepackage{graphicx}
 \usepackage[export]{adjustbox}

 \newcommand{\ba}{\begin{eqnarray}}
\newcommand{\ea}{\end{eqnarray}}

\newcommand{\ie}{{\it i.e. }}

\DeclareMathOperator\Tr{Tr}

\newtheorem{definition}{Definition}
\newtheorem{proposition}{Proposition}

\usepackage[colorlinks=true,linkcolor=blue,citecolor=magenta,urlcolor=blue]{hyperref}
\allowdisplaybreaks

\begin{document}
\raggedbottom

\title{Certification of Absolutely Entangled Sets in the Prepare-and-Measure Scenario}
\author{Ram Krishna Patra}
\affiliation{HUN-REN Institute for Nuclear Research, PO Box 51, H-4001 Debrecen, Hungary}

\author{Abdelmalek Taoutioui}
\affiliation{HUN-REN Institute for Nuclear Research, PO Box 51, H-4001 Debrecen, Hungary}

\author{Tamás Vértesi}
\affiliation{HUN-REN Institute for Nuclear Research, PO Box 51, H-4001 Debrecen, Hungary}
\begin{abstract}
    An absolutely entangled set (AES) is a collection of states for which at least one member remains entangled under any choice of global unitary transformation. We develop a semi-device-independent framework for certifying this property from dimensionally bounded prepare-and-measure correlations. A witness value above the maximum attainable with separable preparations of a given dimension rules out every non-AES realization. To upperbound this limit, we construct a semidefinite-programming (SDP) hierarchy tailored to separable preparations. For a two-qubit random-access-code witness, the SDP upper bound and the heuristic lower bound coincide to the reported precision, yielding a concrete AES certification. We also formulate a general construction of witnesses for extended preparation sets containing target pure-state AESs. Examples based on equiangular tight frames, a minimal AES, and mutually unbiased bases further illustrate the scope of the method. In summary, our results establish a systematic and experimentally feasible approach to the certification of AESs in semi-device-independent quantum information protocols.
\end{abstract}
\maketitle

\section{Introduction}
Quantum entanglement is one of the most fundamental features of quantum mechanics and constitutes the central resource underlying the advantages of quantum information processing over its classical counterparts. In the seminal quantum teleportation protocol, the assistance of a maximally entangled state and two bits of classical communication enables the perfect transfer of an unknown quantum state between distant parties \cite{Bennett1993}. Subsequently, it was demonstrated that every entangled state can exhibit nonclassical teleportation capability \cite{Cavalcanti2017}. Entanglement also provides a significant advantage in communication tasks, as exemplified by quantum superdense coding, where the communication capacity of a single qubit can be effectively doubled through shared entanglement \cite{Bennett1992}. Beyond these paradigmatic applications, quantum entanglement plays a crucial role in a broad range of quantum information protocols, including communication complexity reduction \cite{Cleve1997,Buhrman2010}, quantum key distribution \cite{Bennett1984BB84,Ekert1991,Cerf2002,Gisin2002}, randomness certification \cite{Acin2012,Wooltorton2022,Skrzypczyk2018}, and several other quantum-enhanced information-processing tasks.

In general, entanglement is inherently subsystem dependent, since the same composite quantum system may appear either entangled or separable depending on the choice and definition of the subsystems \cite{Zanardi2001,Zanardi2004}. More precisely, the definition of subsystems, and consequently that of entanglement, depends on the choice of global reference frame, which can be equivalently described by the action of a global unitary transformation \(U\) on the composite quantum system. As a result, any entangled state can, in principle, be transformed into a product state under an appropriate global unitary operation. Consequently, individual entangled states are generally insufficient for establishing reference-frame-independent entanglement advantage. To address this limitation, recent works have introduced the concept of sets of states for which at least one member remains entangled under arbitrary global unitary transformations \cite{Cai2021,Yu2021,Li2020arxiv}. Such collections, known as AES, provide a fundamental framework for realizing reference-frame-independent entanglement-based quantum advantages.

On the other hand, another prominent communication framework, namely the prepare-and-measure (PM) scenario, plays a central role in quantum information processing tasks. In this setting, a preparation device encodes information into classical or quantum states, which are subsequently transmitted to a measurement device. The PM scenario has emerged as a well-established platform for the semi-device-independent (SDI) certification of quantum resources, relying solely on assumptions regarding the dimension of the underlying communicated system. Beyond dimension-bounded frameworks, several alternative SDI approaches based on constraints such as entropy \cite{Chaves2015}, distinguishability \cite{Brunner2013,Brask2017,Wang2019NPJQI}, information content \cite{Tavakoli2020quantum,Chaturvedi2020}, rotational symmetry \cite{Jones2026} and energy \cite{RochiCarceller2026,Raffaele2025} have also been investigated. For a recent review of PM scenarios and their SDI applications, see Ref.~\cite{Brask2026}. The PM framework has found numerous applications in quantum communication \cite{Gallego2010,Bowles2015}, quantum random access codes (QRACs) \cite{Ambainis2002,Tavakoli2015,M2021}, quantum random number generation (QRNG) \cite{Li2011,Lunghi2015,Passaro2015}, self-testing protocols \cite{Navascues2023,Drotos2024,Tavakoli2018,Miklin2021}, detection of quantum composition \cite{Naik2022,Patra2023} and various foundational aspects of quantum theory. Furthermore, several nonclassical quantum resources, including magic \cite{Chowdhury2026, Zamora2025}, measurement incompatibility \cite{Egelhaaf2025,Veeren2024}, and measurements \cite{Mironowicz2019,Farkas2019,Tavakoli2020ScAd}, have been successfully certified within PM scenarios. Despite the central role of entanglement in enabling quantum advantages, its certification in prepare-and-measure settings has thus far been largely restricted to entanglement-assisted prepare-and-measure (EAPM) scenarios \cite{Tavakoli2021,Bakhshinezhad2024}. In contrast, the certification of entanglement in the standard PM framework without presharing auxiliary entanglement remains largely unexplored.

Therefore, in this work, we address this gap by developing a semi-device-independent framework for certifying entanglement in a reference-frame-independent manner. To this end, we construct witnesses based on PM correlations that are bounded for all separable preparations. As a key technical contribution, we develop an SDP hierarchy specifically tailored to systematically upper-bound the correlations achievable with separable preparations, thereby providing rigorous separable bounds for the proposed witnesses. Consequently, under the assumption of a bounded Hilbert-space dimension and the causal structure of the PM setup, an observed violation of the corresponding separable bound certifies that the prepared set of states constitutes an AES. Beyond the certification of specific instances, we provide a general framework for constructing correlation witnesses for an extended set of AES constructed by augmenting the target AESs with ancillary states. Finally, we demonstrate the scope of our approach through several representative families of AESs arising from distinct state configurations.

\section{Absolutely Entangled Sets}
Let $[N]$ denote the set $\{1,2,\cdots,N\}$ for any positive integer $N$. We also employ the notation $\mathbb{C}^{d} \equiv \mathbb{C}^{d_1} \otimes \mathbb{C}^{d_2}$  and use them interchangeably, where $d=d_1 d_2$. Furthermore, we denote the set of all unitary operators acting on $\mathbb{C}^{d_1} \otimes \mathbb{C}^{d_2}$ by $U(d_1d_2)$. Entanglement is defined between two subsystems of a composite system, in general, it depends on the choice of the global reference frame, since a global unitary transformation can map an entangled state to a separable one. Consequently, reference-frame-independent quantum advantage can only be established by considering states that remain entangled under arbitrary global unitary transformations. This observation motivates the notion of AES \cite{Cai2021}, which is formally defined as 
\begingroup
\renewcommand{\thedefinition}{1a}
\begin{definition}
    A set of $m$ pure states $S=\{\ket{\psi_1},\ket{\psi_2},\cdots,\ket{\psi_m}\}$ in the Hilbert space $\mathbb{C}^{d_1}\otimes \mathbb{C}^{d_2}$ is said to form an AES if for every unitary transformation $U\in U(d_1d_2)$ at least one of the transformed states $U\ket{\psi_i}$ in the set remains entangled with respect to the bipartition $\mathbb{C}^{d_1}\otimes \mathbb{C}^{d_2}$.
\end{definition}
\endgroup

The minimal cardinality of an AES consisting of pure states in $\mathbb{C}^{d_1}\otimes \mathbb{C}^{d_2}$ is lower bounded by $\max(d_1,d_2)+2$ \cite{Cai2021}. It is known to be tight when $\min(d_1,d_2)=2$. In particular, any AES in $\mathbb{C}^{2}\otimes \mathbb{C}^{2}$ must consist of at least four states. The smallest AES with four elements explored in Ref.~\cite{Cai2021}, is given by
\begin{equation}\label{smallest_AES}
AES1=\{\ket{00},c\ket{00}+\sqrt{1-c^2}\ket{01},c\ket{00}+\sqrt{1-c^2}\ket{10},c\ket{00}+\sqrt{1-c^2}\ket{11}\}.
\end{equation}
This single-parameter family forms an AES for $c \in (1/2,1)$. Explicit constructions of AESs in $\mathbb{C}^{d_1}\otimes \mathbb{C}^{d_2}$ have been studied for sets containing $(d_1+d_2)$ states \cite{Cai2021}, and later extended to constructions with $d_1 d_2$ and $d_1 d_2+1$ elements \cite{Li2020arxiv}. In the following sections, we further consider additional examples of AESs related to equiangular tight frames (ETF), symmetric informationally complete positive operator-valued measures (SIC-POVM), and mutually unbiased bases (MUBs).
The notion of AES is easily extendable in the case of mixed states, and the definition is given as
\begingroup
\renewcommand{\thedefinition}{1b}
\begin{definition}
A set of $m$ mixed states $S=\{\rho_1,\cdots,\rho_m\}$, $\rho_i\in\mathcal{D}(\mathbb{C}^{d_1}\otimes\mathbb{C}^{d_2})~\forall~i$, is said to form an AES, if for every global unitary transformation $U\in U(d_1d_2)$ at least one of the transformed states $U\rho_iU^{\dagger}$ in the set remains entangled with respect to the bipartition $\mathbb{C}^{d_1}\otimes\mathbb{C}^{d_2}$. Here \(\mathcal{D}(\mathbb{C}^d)\) denotes the set of density operators acting on \(\mathbb{C}^d\).
\end{definition}
\endgroup
\setcounter{definition}{1}
For AESs consisting of mixed states, comparatively little is known. In particular, the general result that a lower bound on the minimum cardinality of a pure-state AES in $\mathbb{C}^{d_1}\otimes\mathbb{C}^{d_2}$ is $\max(d_1,d_2)+2$ has not been established for mixed-state AESs. Consequently, even in the tight two-qubit case (i.e., $d_1=d_2=2$), the existence of AESs consisting of only two or three mixed states remains an open question~\cite{Yu2021}. 
\section{Scenario and Construction of the witness in Prepare-and-Measure setup}\label{Sec_Scenario}
\begin{figure}[h!]
    \centering
    \includegraphics[width=0.5\linewidth]{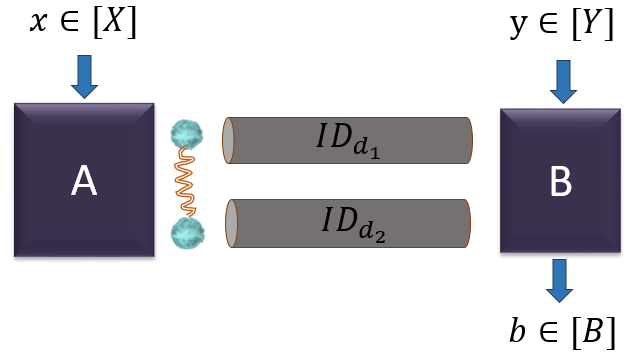}
    \caption{Two-party PM scenario in which two qudit identity channels are shared from Alice to Bob. Upon receiving an input $x\in[X]$, Alice prepares a $(d=d_1\times d_2)$-dimensional quantum state $\rho_x\in\mathcal{D}(\mathbb{C}^{d_1}\otimes\mathbb{C}^{d_2})$ and then sends the two subsystems through the respective channels to Bob. Upon receiving both subsytems, Bob performs a measurement labeled $y\in [Y]$ and outputs an outcome $b\in [B]$.}
    \label{fig1}
\end{figure}

We consider the simplest bipartite PM scenario consisting of a single sender and a single receiver, as illustrated in Fig.~\ref{fig1}. The sender, Alice, receives an input $x\in [X]$, while the receiver, Bob, receives an input $y\in [Y]$ and produces an output \(b \in [B]\). Upon receiving the input \(x\), Alice prepares a quantum state \(\rho_x\). Subsequently, depending on the input \(y\), Bob performs a measurement described by the POVM element \(M_b^y\) and returns the outcome \(b\). Therefore, the resulting PM correlations are described by
\begin{equation}
P(b|x,y)=\Tr(\rho_x M_b^y).
\end{equation}

In the PM framework, unrestricted message dimensions allow any correlation to be reproduced using a classical communication alone. Consequently, the only operational assumption imposed in our scenario is an upper bound on the message dimension, thereby placing the protocol within the semi-device-independent regime. In particular, we assume that Alice has access to two perfect qudit quantum channels, each of local dimension \(d_1\) and \(d_2\), respectively. This restriction implies that the communicated quantum systems are two-qudit states, i.e.,
\begin{equation*}
\rho_x \in \mathcal{D}(\mathbb{C}^{d_1} \otimes \mathbb{C}^{d_2}).
\end{equation*}
To certify AESs within the PM scenario, we introduce the following linear witness:
\begin{equation}
W=\sum_{b,x,y}\alpha_{bxy}P(b|x,y)\leq Q^{\mathrm{SEP}},
\end{equation}
where, the bound $Q^{\mathrm{SEP}}$ is satisfied  by all sets of separable preparations.
\begin{proposition}\label{prop1}
Consider the witness
$W=\sum_{b,x,y}\alpha_{bxy}P(b|x,y)\leq Q^{\mathrm{SEP}}$, where $Q^{\mathrm{SEP}}$ denotes the maximum value attainable using separable states. If the observed value satisfies $w>Q^{\mathrm{SEP}}$, then the set $\{\rho_x\}_x$ of the underlying states is certified to be an AES in the semi-device-independent paradigm.
\end{proposition}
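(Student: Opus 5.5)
The plan is to prove the contrapositive: if the prepared set $\{\rho_x\}_x$ is \emph{not} an AES, then every correlation it produces obeys $W\leq Q^{\mathrm{SEP}}$. The key fact is that global unitaries can be absorbed into Bob's measurements, so the witness value does not depend on the global reference frame. Throughout we rely on the semi-device-independent assumption that every $\rho_x\in\mathcal{D}(\mathbb{C}^{d_1}\otimes\mathbb{C}^{d_2})$. We also use the fact that $Q^{\mathrm{SEP}}$ is the supremum of $W$ over \emph{all} separable preparations of this dimension together with \emph{all} POVMs $\{M_b^y\}$ on $\mathbb{C}^{d_1d_2}$.

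First, unpack the negation of Definition 1b. If $\{\rho_x\}_x$ is not an AES, there exists some $U\in U(d_1d_2)$ such that every transformed state $\sigma_x:=U\rho_xU^\dagger$ is separable with respect to $\mathbb{C}^{d_1}\otimes\mathbb{C}^{d_2}$. Second, rewrite the observed statistics using the cyclicity of the trace:
\begin{equation*}
P(b|x,y)=\Tr(\rho_x M_b^y)=\Tr\bigl(U\rho_xU^\dagger\, UM_b^yU^\dagger\bigr)=\Tr(\sigma_x N_b^y),
\end{equation*}
where $N_b^y:=UM_b^yU^\dagger$. The set $\{N_b^y\}_b$ is again a valid POVM for each $y$, since conjugation by a unitary preserves positivity and $\sum_b N_b^y=U\mathbb{1}U^\dagger=\mathbb{1}$. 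Consequently the same correlation $P(b|x,y)$ is reproduced by the separable preparations $\{\sigma_x\}$, which live in the same dimension, together with the legitimate measurements $\{N_b^y\}$. By the definition of $Q^{\mathrm{SEP}}$, this gives $w=\sum\alpha_{bxy}P(b|x,y)\leq Q^{\mathrm{SEP}}$.

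Contraposing, an observed value $w>Q^{\mathrm{SEP}}$ means that no global unitary can render all the preparations separable, which is exactly Definition 1b. Pure states are covered as a special case, so Definition 1a follows as well.

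The only delicate step, and hence the main one to state carefully, is the specification of $Q^{\mathrm{SEP}}$: it must be optimized over arbitrary (including entangled) measurements on $\mathbb{C}^{d_1d_2}$, not over local measurements. Otherwise the unitarily rotated POVM $N_b^y$ might fall outside the optimization set and the argument would fail. I would also note explicitly that the dimension assumption is essential. Without it, the states could be embedded in a larger space, where a unitary could make them separable with respect to a different tensor factorization.
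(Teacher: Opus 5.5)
Your proposal is correct and follows essentially the same argument as the paper: negate the AES property to obtain a global unitary $U$ making every $U\rho_xU^\dagger$ separable, absorb $U$ into Bob's measurement via $UM_b^yU^\dagger$, and conclude that this separable ensemble reproduces the same witness value, which contradicts the definition of $Q^{\mathrm{SEP}}$. The paper phrases this as a proof by contradiction and adds a convex-decomposition step into product components. That step is redundant once, as you do, $Q^{\mathrm{SEP}}$ is taken to be the optimum over all separable states and all POVMs on $\mathbb{C}^{d_1d_2}$.
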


\begin{proof}
Here, we proceed by contradiction by assuming the existence of a non-absolutely entangled set of states $\{\rho_x\}_x$, which violates the witness \ie $w>Q^{\mathrm{SEP}}$. By denoting the optimal measurement $y$ for this set of preparations as $\{M_b^y\}_b$ we obtain
\begin{equation}
W=\sum_{b,x,y}\alpha_{bxy}\Tr(\rho_x M_b^y)> Q^{\mathrm{SEP}}.
\end{equation}
As the set of states $\{\rho_x\}_x$ are not absolutely entangled sets there exist a global unitary $U\in U(d_1d_2)$ such that $U\rho_x U^\dagger=\sum_{\lambda}p(\lambda|x)\sigma_x^{1,\lambda}\otimes\sigma_x^{2,\lambda}~\forall~x$. By exploiting the global unitary freedom associated with the subsystem decomposition, the witness can be expressed as
\begin{align}
    &\sum_{b,x,y} \alpha_{bxy}\Tr(U\rho_x U^\dagger U{M}_b^y U^\dagger)>Q^{\mathrm{SEP}}\\
    \implies&\sum_{b,x,y} \alpha_{bxy}\Tr\sum_\lambda(p(\lambda|x)\sigma_x^{1,\lambda}\otimes\sigma_x^{2,\lambda} \widetilde{M}_b^y )>Q^{\mathrm{SEP}}, \text{where}~\widetilde{M}_b^y=UM_b^yU^\dagger.
\end{align}
This implies that even the set of separable states $\{\sum_\lambda p(\lambda|x)\sigma_x^{1,\lambda}\otimes\sigma_x^{2,\lambda}\}_x$ can also obtain the same violation. Since the witness is linear in the preparation states, its value is a convex combination of the values corresponding to the product-state components
$(\sigma_x^{1,\lambda}\otimes\sigma_x^{2,\lambda})$. Hence, if the separable ensemble achieved a value strictly larger than $Q^{\mathrm{SEP}}$, at least one corresponding product-state realization would also have to attain a value larger than $Q^{\mathrm{SEP}}$. Therefore this leads to a contradiction as the $Q^{\mathrm{SEP}}$ is the optimal violation for any separable preparations. The only assumption in this setup is the restriction on the systems local dimensions being $d_1$ and $d_2$ respectively. Hence, we proved that any violation of the witness will lead to semi-device-independent certification of absolute entangled sets.
\end{proof}

 Let $C_{\mathrm{NAES}},C_{\mathrm{SEP}},$ and $C_{\mathrm{AES}}$ denote the set of PM correlations $P(b|x,y)$ generated by non-absolutely entangled set (NAES) of states, separable states (SEP) and AES, respectively. In particular, a correlation 
 $P(b|x,y)=\Tr(\rho_xM_b^y)$ belongs to the set $C_\alpha$ with $\alpha\in\{\mathrm{NAES},\mathrm{SEP},\mathrm{AES}\}$, whenever the set of preparation ensemble $\{\rho_x\}_x$ belongs to the corresponding class $\alpha$. Let us also allow shared randomness in this scenario. Hence, if we consider two correlations $P, \tilde{P}\in C_\alpha$ then their convex combination $P_\lambda=\lambda P+(1-\lambda)\tilde{P}\in C_\alpha$. Using shared randomness, Alice and Bob implement the first preparation and measurement strategy with probability $\lambda$ and the other strategy with probability $1-\lambda$. Since both underlying strategies belong to the same class $\alpha$, and shared randomness is taken to be a free resource, the resulting correlation also belongs to $C_\alpha$. This implies that the set of correlations corresponding to each set are indeed convex. The proof of proposition \ref{prop1} exploits the global unitary freedom of the measurement choice in the prepare-and-measure setup. Therefore the PM correlations generated by any NAES can always be obtained by SEP by choosing a suitable optimal measurement. Even though the prepare-and-measure correlation $P(b|x,y)$ generated by SEP and NAES states are exactly same \ie $C_{\text{NAES}}=C_{\text{SEP}}$, the correlations generated by absolutely entangled set $C_{\text{AES}}$ is larger. The witness $W=Q^{\mathrm{SEP}}$ is a separating hyperplane between these two sets of correlations (see Fig.~\ref{fig_correlation}).

\begin{figure}[h!]
    \centering
    \includegraphics[width=0.45\linewidth]{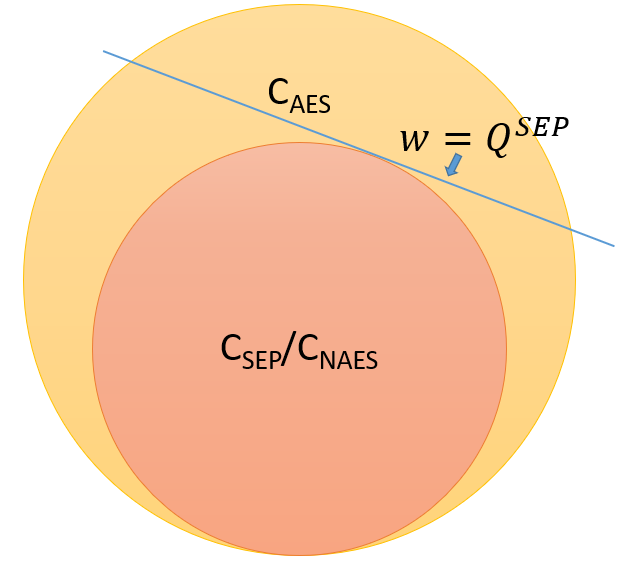}
    \caption{Illustration of the correlations generated in a two-party prepare-and-measure scenario with two qudit identity channels with local dimensions $d_1$ and $d_2$ shared between the parties. Each point represents a conditional probability vector, $\Vec{P}={P(b|x,y)}_{bxy}\in\mathbb{R}^N$. The subset $C_{\text{SEP}}=C_{\text{NAES}}$ contains correlations admitting a separable (or non-AES) realization. The hyperplane $W = Q^{\mathrm{SEP}}$ supports $C_{\text{SEP}}$. Correlations satisfying $W>Q^{\mathrm{SEP}}$ certify that every compatible set of states is an AES. The region $W\le Q^{\mathrm{SEP}}$ may nevertheless contain correlations admitting AES realizations.}
    \label{fig_correlation}
\end{figure}

\section{Methods for obtaining separable bounds}
In this section, we present the numerical methods used to bound the maximum value of a correlation witness achievable with separable preparations. As introduced above, a general linear correlation witness in the prepare-and-measure scenario can be expressed as
\begin{equation}\label{obj2}
W=\sum_{b,x,y}\alpha_{bxy}P(b|x,y)\leq Q^{\mathrm{SEP}},
\qquad
\alpha_{bxy}\in\mathbb{R},
\end{equation}
where $Q^{\mathrm{SEP}}$ denotes the maximum value attainable over all separable preparations compatible with the considered scenario. Determining $Q^{\mathrm{SEP}}$ involves a nonconvex optimization over both the preparation states and measurement operators, making its direct evaluation challenging.

To address this problem, we employ two complementary numerical approaches. First, we use an alternating (see-saw) optimization over the preparations and measurements to obtain a lower bound on $Q^{\mathrm{SEP}}$, which are often very reliable up to numerical precision. For a recent review of iterative optimization methods for quantum correlations, see Ref.~\cite{Lukacs2026}. We then introduce a hierarchy of semidefinite-programming (SDP) relaxations tailored to separable preparations, which provides systematically improved upper bounds on $Q^{\mathrm{SEP}}$. For a recent review of SDP relaxations for quantum correlations, see Ref.~\cite{Tavakoli2024}.
Together, these approaches enable us to bound the optimal separable value of a given correlation witness from both sides.
\subsection{Lower bound via alternating optimization}\label{SEESaw_section}
The witness introduced in Eq.~(\ref{obj2}) can be expressed in terms of the preparation states and measurement operators through the Born rule. The maximum value attainable with separable preparations is therefore given by
\begin{equation}
\begin{aligned}
Q^{\mathrm{SEP}}=\max_{\rho_x,M_b^y}\quad
&\sum_{b,x,y}\alpha_{bxy}\Tr(\rho_x M_b^y)\\
\text{subject to}\quad
&\rho_x \geq 0,\qquad \rho_x\in \mathrm{SEP}(\mathbb{C}^{d_1}\otimes\mathbb{C}^{d_2}),\\
&\Tr(\rho_x)=1,\qquad M_b^y \geq 0,\\
&\sum_b M_b^y = \mathbb{I},
\qquad M_b^y \in \mathcal{B}(\mathbb{C}^{d_1}\otimes\mathbb{C}^{d_2}).
\end{aligned}\label{eq_seesaw}
\end{equation}

The optimization in Eq.~\eqref{eq_seesaw} is nonconvex owing to the bilinear dependence of the objective function on the preparation states $\rho_x$ and measurement operators $M_b^y$. This difficulty can be addressed through an alternating optimization procedure, in which one set of variables is held fixed while optimizing over the other. In particular, for fixed preparation states \(\{\rho_x\}\), the optimization reduces to a semidefinite program (SDP) over the measurement operators \(\{M_b^y\}\). 
Conversely, for fixed measurements, the optimization over ${\rho_x}$ is convex apart from the separability constraint. For bipartite systems of dimensions $2\times2$ and $2\times3$, separability is equivalent to positivity under partial transposition (PPT) \cite{Peres1996,Horodecki1996}. In these cases, the preparation step can therefore be written exactly as
\begin{equation}
\begin{aligned}
\max_{{\rho_x}}\quad
&\sum_{b,x,y}\alpha_{bxy}\Tr(\rho_xM_b^y)\\
&\text{s. t.}\quad
\rho_x\geq0,\qquad \Tr(\rho_x)=1,\
&\rho_x^{T_B}\geq0,\qquad \forall x 
\end{aligned}
\end{equation}
where $T_B$ denotes partial transposition with respect to subsystem $B$. Thus, in these low-dimensional cases, both steps of the alternating optimization are SDPs and every iteration corresponds to a valid separable strategy. Starting from randomly initialized separable preparations, we alternately optimize the measurements and preparations until the objective value converges. Repeating the procedure for several random initializations and retaining the largest value obtained yields
\begin{equation}
Q_{\mathrm{LB}}^{\mathrm{SEP}}\leq Q^{\mathrm{SEP}},
\end{equation}
since the see-saw procedure is not guaranteed to reach the global optimum.

\noindent For higher-dimensional bipartite systems, the PPT criterion remains necessary but is no longer sufficient for separability in general. To obtain a feasible lower bound on the optimal separable value without relying on PPT, we employ a three-step see-saw optimization over the measurement operators and the two local factors of each product preparation (see Refs.~\cite{Vertesi2011,Bennet2014} for related three-step optimization schemes). Since the objective is linear in the preparation states, it is sufficient to optimize over pure product preparations, or more generally over product density operators of the form $\rho_x=\rho_x^A\otimes\rho_x^B.$
The corresponding optimization can be written as
\begin{equation}
Q^{\mathrm{SEP}}=
\max_{{\rho_x^A},{\rho_x^B},{M_b^y}}
\sum_{b,x,y}
\alpha_{bxy}
\Tr\left[
\left(\rho_x^A\otimes\rho_x^B\right)M_b^y
\right],
\label{eq:three_step_seesaw}
\end{equation}
subject to
\begin{align}
&\rho_x^A\geq0,\qquad \Tr(\rho_x^A)=1,\nonumber\
&\rho_x^B\geq0,\qquad \Tr(\rho_x^B)=1,\nonumber\
&M_b^y\geq0,\qquad \sum_bM_b^y=\mathbb{I}.
\end{align}
The optimization in Eq.~\eqref{eq:three_step_seesaw} is not convex, but it becomes an SDP when any two of the three sets of variables are held fixed. We therefore proceed iteratively. Starting from randomly initialized local states ${\rho_x^A}$ and ${\rho_x^B}$, we first optimize over the measurement operators ${M_b^y}$. In the second step, the measurements and Alice's local states are fixed while the optimization is performed over Bob's local states ${\rho_x^B}$. In the third step, the measurements and Bob's local states are fixed while Alice's local states ${\rho_x^A}$ are optimized. These three steps are repeated until the change in the objective value falls below a prescribed convergence threshold. To reduce the dependence on the initial conditions, the entire procedure is repeated for multiple random initializations, and the largest value obtained is retained. Since every iteration corresponds to a valid product-state preparation strategy, and hence to a valid separable strategy, the resulting value provides a lower bound on the optimal separable value, $Q_{\mathrm{LB}}^{\mathrm{SEP}}\leq Q^{\mathrm{SEP}}.$
This three-step see-saw procedure therefore provides a practical lower-bound method for higher-dimensional systems, where the PPT criterion no longer gives an exact characterization of separability.
\subsection{Upper bound by hierarchy of SDP relaxations}
The characterization of quantum correlations under Hilbert-space dimension
constraints has been extensively studied in the literature ~\cite{Navascues2015prl,Navascues2015pra,Tavakoli2019,Pauwels2022}.
To formulate the corresponding semidefinite relaxation, consider the operator set $ \mathcal{L}=\{\mathbb{I},\vec{\rho},\vec{M}\},$ where
$\vec{\rho}=(\rho_1,\rho_2,\ldots,\rho_{x})$
denotes the collection of preparation states and
$\vec{M}=(M_1^1,M_2^1,\ldots,M_{b}^{y})$
denotes the collection of measurement operators in the prepare-and-measure
scenario. Let $S_k$ denote an operator sequence consisting of all monomials
generated from $\mathcal{L}$ up to a degree $k$. The maximum degree $k$ specifies the level of the relaxation hierarchy. For each sequence $S_k$, one
constructs a positive semidefinite moment matrix $\Gamma$ of dimension
$|S_k|\times |S_k|$, with entries
\begin{equation}
    \Gamma_{\mu,\nu}=
    \Tr(\mu\nu^\dagger),
    \qquad
    \mu,\nu\in S_k.
    \label{eq:moment_matrix}
\end{equation}
Characterizing all constraints imposed on $\Gamma$ by a fixed Hilbert-space dimension is, in general, nontrivial. The hierarchy developed for dimension-bounded PM scenarios addresses this difficulty through a sampling-based construction ~\cite{Navascues2015prl,Navascues2015pra}. At a given relaxation level, preparation states $\vec{\rho}$ and measurement operators $\vec{M}$ are randomly sampled from a Hilbert space of fixed dimension $d$, and the corresponding moment matrices are constructed. By construction, these matrices satisfy the algebraic and dimension-dependent relations associated with the chosen Hilbert-space dimension. Repeated sampling generates additional moment matrices, and the procedure is terminated once newly generated matrices become linearly dependent on the previously obtained ones. This process yields a finite collection of moment matrices $\{\Gamma_1,\Gamma_2,\cdots,\Gamma_m\},$ which forms a basis for the feasible affine subspace \(\mathcal{F}\) compatible with the dimension constraints.

A candidate moment matrix can then be represented as $\Gamma=\sum_{i=1}^{m}a_i\Gamma_i,~ a_i\in\mathbb{R},$
and the desired dimension-constrained bound is obtained by imposing $\Gamma\geq0$ and optimizing the relevant linear functional $F(\Gamma)$.
The resulting semidefinite relaxation takes the form
\begin{equation}
\begin{aligned}
    \max_{\{a_i\}}\quad
    &F(\Gamma)\\
    \text{subject to}\quad
    &\Gamma=\sum_{i=1}^{m}a_i\Gamma_i,\\
    &\Gamma\geq0,\\
    & \sum_i a_i=1.
\end{aligned}
\label{eq:NV_SDP}
\end{equation}
The optimum of this relaxation provides an upper bound on the value attainable by quantum systems of the prescribed
dimension. Enlarging the operator sequence can tighten relaxations.

A relaxation of degree $k$ is defined by an operator sequence $S_k$ containing all monomials up to degree $k$. Intermediate relaxation levels can also be considered by augmenting a lower-order sequence with selected higher-order monomials. For example, at level 1+AB, the operator sequence contains all first-order monomials together with the specific second-order monomial AB. Although the random-sampling procedure used to construct the corresponding moment-matrix basis is conceptually straightforward, its computational cost increases rapidly with the relaxation level. In particular, both the number of monomials and the dimension of the associated moment matrices grow quickly as higher-order operators are included. Symmetries of the optimization problem can be exploited to substantially reduce this computational cost~\cite{Aguilar2018,Tavakoli2019}.

A direct application of the sampling based SDP construction to the present prepare-and-measure scenario yields bounds for arbitrary quantum preparations of the prescribed dimension and therefore does not distinguish separable from
entangled preparations. A natural modification is to  generate the moment-matrix basis by sampling only product preparations,
\begin{equation}
    \rho_x
    =
    \rho_{A_1}^{x}\otimes\sigma_{A_2}^{x},
    \qquad x\in[X].
    \label{eq:product_sampling}
\end{equation}
For linear witnesses, this restriction entails no loss of generality when
optimizing over separable preparations, as the optimality is always achieved by pure product states.

Nevertheless, sampling only the global product operators in
Eq.~\eqref{eq:product_sampling} does not by itself impose sufficient linear
constraints on the resulting moment-matrix space: the span generated in this
manner can coincide with that obtained from unrestricted preparations.
Consequently, the corresponding relaxation does not need to retain the desired
separability restriction and may reproduce the bound associated with
unrestricted quantum preparations.

To overcome this limitation, we introduce a modified moment-matrix
construction that explicitly retains the local operator structure underlying
each product preparation. Instead of including only the global operators
$\rho_{A_1}^{x}\otimes\sigma_{A_2}^{x}$, we augment the generating operator
set with the corresponding local operators. For each preparation input $x$, we therefore include $\rho_{A_1}^{x}\otimes\mathbb{I}_{A_2},~\mathbb{I}_{A_1}\otimes\sigma_{A_2}^{x},\rho_{A_1}^{x}\otimes\sigma_{A_2}^{x}$.
Equivalently, the preparation sector of the extended operator collection is
\begin{align}
    \vec{\rho}_{\mathrm{ext}}
    =\big(&\rho_{A_1}^{1}\otimes\mathbb{I}_{A_2},\ldots,\rho_{A_1}^{|X|}\otimes\mathbb{I}_{A_2},~\mathbb{I}_{A_1}\otimes\sigma_{A_2}^{1},\ldots,\mathbb{I}_{A_1}\otimes\sigma_{A_2}^{|X|},~\rho_{A_1}^{1}\otimes\sigma_{A_2}^{1},\ldots,\rho_{A_1}^{|X|}\otimes\sigma_{A_2}^{|X|}
    \big).
    \label{eq:extended_sep_set}
\end{align}
The inclusion of the local factors introduces additional linear relations among the entries of the moment matrix that encode the product structure of the sampled preparations. Sampling moment matrices from this enlarged operator set therefore allows the product-state structure relevant to the separable optimum to be retained within the dimension-constrained relaxation. Enlarging the operator sequence incorporates additional constraints and may produce tighter upper bounds. For the numerical implementation, we adapt the \texttt{QDimSum} package of Ref.~\cite{Tavakoli2019} to incorporate the modified sampling procedure described above. The symmetry-reduction techniques implemented in \texttt{QDimSum} substantially reduce the size of the resulting semidefinite programs. In the following section, we apply this hierarchy together with the see-saw lower-bound method discussed above to determine tight separable bounds for the correlation witnesses considered in this work.
\section{RAC witness and detection of AES}
We consider a semi-device-independent prepare-and-measure scenario in which Alice has access to two noiseless identity channels of local dimension two \ie $d_1=d_2=2$. The input received by Alice is a pair of dits \(x=(x_0,x_1)\in\{0,1,2,3\}^{2}\), while Bob receives an input \(y\in\{0,1\}\) and produces an outcome \(b\in\{0,1,2,3\}\). The most general correlation witness in this setting can be expressed as in Eq.~(\ref{obj2}). Here, we focus on a random access code (RAC) witness of the form
\begin{align}\label{RAC_witness}
W^{\mathrm{RAC}}&=\sum_{b,x,y}\alpha_{bxy}P(b|x,y)\leq Q^{\mathrm{SEP}},\\
\alpha_{bxy}&=
\begin{cases}
\frac{1}{32}, & \text{if } b=x_y,\\
0, & \text{otherwise}.
\end{cases}\nonumber
\end{align}

In other words, a positive payoff is obtained whenever Bob correctly guesses the $y$-th bit of Alice’s input string, i.e., $b=x_{y-1}$. Now the aim is to obtain a suitable bound $Q^{\mathrm{SEP}}$, such that it is satisfied by all separable preparations.

\noindent\textbf{Separable bound:}
We determine the separable bound using both alternating convex optimization (lower bound) and a semidefinite-programming (SDP) relaxation hierarchy (upper bound). As the local dimension of the systems are $d_1=d_2=2$, the PPT criterion is sufficient to impose the separability constraints. The numerical results are summarized in Table~\ref{tab:RAC_bounds}.

\begin{table}[h!]
\centering
\begin{tabular}{|c|c|cc|ccc|}
\hline
\multirow{3}{*}{Witness}
& \multirow{3}{*}{Lower bound}
& \multicolumn{2}{c|}{Upper bound: projective measurements}
& \multicolumn{3}{c|}{Upper bound: POVMs}
\\
\cline{3-7}
&
& \multicolumn{2}{c|}{Hierarchy level}
& \multicolumn{3}{c|}{Hierarchy level}
\\
\cline{3-7}
&
& 2
& $~~2+\mathrm{MMM}$
& 2
& $2+\mathrm{MMM}~$
& $~~2+\mathrm{MMM}+\rho\mathrm{MM}$
\\
\hline\hline
$W^{\mathrm{RAC}}$ & 0.728553 & 0.728905 & 0.728553 & 0.730432 & 0.730432 & 0.728553 \\\hline
\end{tabular}
\caption{The numerical lower bound is obtained using alternating optimization, while the upper bounds are derived from the SDP relaxation hierarchy. Intermediate relaxation levels are constructed by augmenting a lower-order operator sequence with selected higher-order monomials. For example, the $2+\mathrm{MMM}$ level contains all monomials up to degree 2 together with the third-order monomials generated solely from the measurement operators. For projective measurements, the upper bound matches the the lower bound at $2+\mathrm{MMM}$ level up to six decimal places. For general POVMs, agreement occurs at $2+\mathrm{MMM}+\rho\mathrm{MM}$ level, thereby identifying the optimal separable value. The SDP relaxation hierarchy is implemented using the \texttt{QDimSum} package.}
\label{tab:RAC_bounds}
\end{table}

The optimal violation of the witness $W^{\mathrm{RAC}}$ can be related to the $4$-level random access code (RAC) protocol \cite{Tavakoli2015}. Alice prepares an entangled state \(\ket{\psi}_{AB}\) and encodes her input $x_0x_1$ in the state $\ket{\psi}_{x_0x_1}$ via unitary operations as
\begin{subequations}
\begin{align}
&\ket{\psi}_{AB}=\frac{\sqrt{3}}{2}\ket{00}+\frac{1}{2\sqrt{3}}(\ket{01}+\ket{10}+\ket{11}),\\
&\ket{\psi}_{x_0x_1}=X^{x_0}Z^{x_1}\ket{\psi}_{AB},
\end{align}
\end{subequations}
where $X$ and $Z$ are the permutation and phase operations given in the computational basis as 
\begin{subequations}
\begin{align}
X=&\ket{01}\bra{00}+\ket{10}\bra{01}+\ket{11}\bra{10}+\ket{00}\bra{11},\\
Z=&\ket{00}\bra{00}+i\ket{01}\bra{01}-\ket{10}\bra{10}-i\ket{11}\bra{11}.  
\end{align}
\end{subequations}

Bob performs either a computational basis measurement or a Fourier basis measurement depending on his input $y$:
\begin{equation}
\begin{aligned}
M_1 &= \{M_{0|1}=P(\ket{00}),~M_{1|1}=P(\ket{01}),~M_{2|1}=P(\ket{10}),~M_{3|1}=P(\ket{11})\},\\
M_2 &= \Big\{M_{0|2}=P\big(\tfrac{1}{2}(\ket{00}+\ket{01}+\ket{10}+\ket{11})\big), ~M_{1|2}=P\big(\tfrac{1}{2}(\ket{00}+i\ket{01}-\ket{10}-i\ket{11})\big),\\
&\qquad M_{2|2}=P\big(\tfrac{1}{2}(\ket{00}-\ket{01}+\ket{10}-\ket{11})\big),~M_{3|2}=P\big(\tfrac{1}{2}(\ket{00}-i\ket{01}-\ket{10}+i\ket{11})\big)\Big\}
\end{aligned}
\end{equation}
where $P(\ket{\psi})=\ket{\psi}\bra{\psi}$ is the projector corresponding to $\ket{\psi}$. For this strategy, the set of preparations \(S=\{\ket{\psi}_{x_0x_1}\}_{x_0x_1}\) achieves a value \(3/4>Q^{\mathrm{SEP}}\), thereby certifying absolute entanglement in the set of the prepared states.

An experimental implementation of the corresponding RAC with ququart messages (four-level quantum systems) has already been reported, yielding an average success probability of $Q_{\mathrm{exp}}=0.7455\pm0.0020$ \cite{Miao2022}. This value exceeds the separable bound $Q^{\mathrm{SEP}}=\frac{1}{4}\left(1+\frac{1}{\sqrt{2}}\right)^2\simeq 0.728553$, and therefore violates the separable benchmark relevant to the present framework. Some care is required in interpreting this result, since the experiment employs a single four-level quantum system rather than two physically distinct qubits. Nevertheless, the Hilbert-space isomorphism $\mathbb{C}^{4}\cong\mathbb{C}^{2}\otimes\mathbb{C}^{2}$,
implies that the same prepare-and-measure statistics can, in principle, be reproduced within a two-qubit realization. Thus, although the experiment does not directly implement a bipartite two-qubit system, it demonstrates that the observed correlations are compatible with a two-qubit representation that also violates the corresponding separable bound.

On the other hand, the set of 16 product states saturating the separable bound is given by 
\begin{equation}
\ket{\chi}_{x_0x_1}=(\sigma_z\otimes P)^{x_1}(\sigma_x^{a_0}\otimes\sigma_x^{a_1})\ket{\tilde{\chi}\tilde{\chi}},
\end{equation}
where \(\sigma_x\), \(\sigma_z\), and \(P=\mathrm{diag}(1,i)\) denote the Pauli \(X\), Pauli \(Z\), and phase operators, respectively, and $\ket{\tilde{\chi}}=\cos\left(\frac{\pi}{8}\right)\ket{0}+\sin\left(\frac{\pi}{8}\right)\ket{1},$ with \(x_0=2a_0+a_1\) in binary representation (where $a_0,a_1\in\{0,1\}$). The exact separable bound obtained by these separable preparations and the same measurements is given by $Q^{\mathrm{SEP}}=\frac{1}{4}\left(1+\frac{1}{\sqrt{2}}\right)^2.$ 
Now, we consider a one-parameter family of AESs defined as
\begin{equation}
\ket{\phi}_{x_0x_1}=\frac{1}{N_{x_0x_1}(c)}\left[c\ket{\psi_{x_0x_1}}+F_{x_0+1,x_1+1}\sqrt{1-c^2}\ket{\chi_{x_0x_1}}\right],
\end{equation}
where $N_{x_0x_1}(c)$ denotes the normalization factor, and $F$ is the $4\times4$ complex phase matrix given by
\begin{equation*}
F=
\begin{pmatrix}
1 & 1 & 1 & 1\\
1 & 1 & i & 1\\
i & -i & 1 & -1\\
i & -1 & i & 1
\end{pmatrix}.
\end{equation*}
\begin{figure}[t!]
    \centering
    \includegraphics[width=0.8\linewidth]{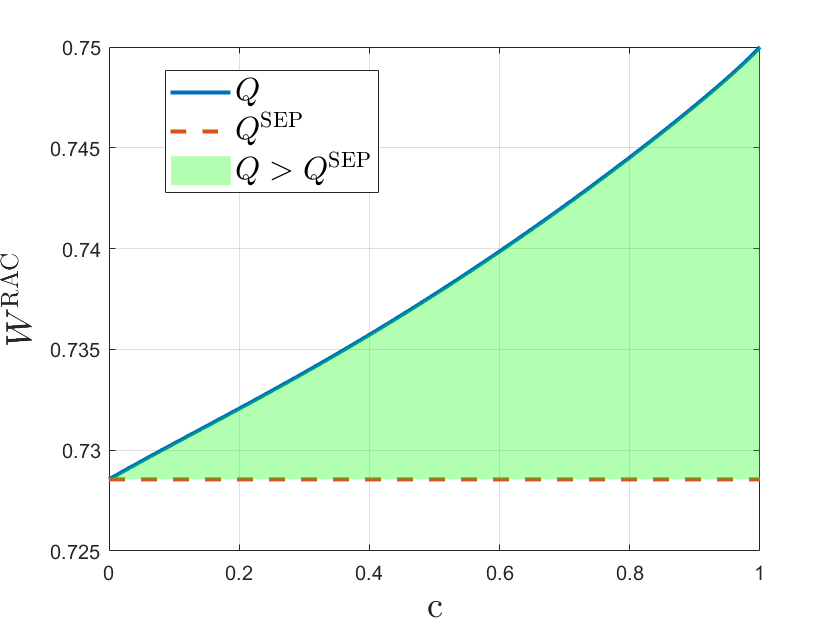}
    \caption{Random access code witness $W^{\text{RAC}}$ as a function of the state parameter $c$. The quantity $Q$ denotes the evaluated witness value for each $c$, while the separable bound $Q^{\text{SEP}}$ serves as the reference threshold. Violation of $Q^{\text{SEP}}$ certifies absolutely entangled sets in the range of $c\in(0,1]$.}
    \label{fig6}
\end{figure}
We analyze the violation of the RAC witness \(W^{\mathrm{RAC}}\) as a function of the parameter \(c\in\left[0,1\right]\). The optimal violation $Q$ of the witness [Eq.~(\ref{RAC_witness})] for the fixed preparation $\ket{\phi}_{x_0x_1}$ is obtained by using semidefinite programming. The variation of the resulting violation, $Q$ and the separable bound $Q^{\mathrm{SEP}}$ with the single parameter $c$ is shown in Fig.~\ref{fig6}. This single parameter family can be certified to be absolutely entangled in the parameter range $c\in(0,1]$.
\section{Witness Construction for Extended AESs}
The witness \(W^{\mathrm{RAC}}\) is highly effective for the certification of absolutely entangled sets (AES): any set of preparations achieving a witness value
\begin{align*}
    W^{\mathrm{RAC}}>\frac{1}{4}\left(1+\frac{1}{\sqrt{2}}\right)^2\equiv Q^{\mathrm{SEP}}
\end{align*}
can be certified as an AES. However, the converse does not generally hold. In particular, there exist AESs of pure states that do not violate the witness bound associated with \(W^{\mathrm{RAC}}\), and therefore cannot be detected using this witness alone. To overcome this limitation, in this section we develop a general methodology for constructing PM witnesses for extended preparation ensembles containing an arbitrary AES of pure states.
\subsection{General methodology}
Let us consider an AES target $S=\{\ket{\psi_1},\ket{\psi_2},\cdots,\ket{\psi_m}\}$,
where $\ket{\psi_i}\in\mathbb{C}^{d_1}\otimes\mathbb{C}^{d_2}$.
We aim to construct a suitable PM witness to certify the absolute entanglement associated with the set \(S\). It has recently been shown that arbitrary sets of preparations and measurements in finite dimension can be self-tested within the prepare-and-measure framework \cite{Navascues2023}. We employ this self-testing construction for dimension $d=d_1d_2,$ using the isomorphism $\mathbb{C}^{d_1d_2}\cong\mathbb{C}^{d_1}\otimes\mathbb{C}^{d_2}$. The target set \(S\) requires the satisfaction of the following relation:
\begin{equation}\label{condition1}
\sum_{i=1}^m\alpha_iP(\psi_i)=\frac{\mathbb{I}}{d_1d_2},
\end{equation}
for some coefficients \(\alpha_i\in\mathbb{R}^+\), where \(P(\psi_i)=\ket{\psi_i}\bra{\psi_i}\) denotes the projector onto \(\ket{\psi_i}\). If the target states does not satisfy Eq. \eqref{condition1}, it can always be augmented with a suitable collection of ancillary states to fulfill the condition. In addition, the self-testing construction requires the enlarged ensemble to possess the Wigner property. When this property is not already satisfied, one may further append a suitable set of fiducial states, following the construction of Lemma~4 of Ref.~\cite{Navascues2023}, so that the resulting ensemble fulfills the required conditions. Consequently, we consider the extended preparation set $\widetilde{S}=\{\ket{\psi_1},\ket{\psi_2},\cdots,\ket{\psi_m},\ket{\psi_{m+1}},\cdots,\ket{\psi_N}\}$, where $\{\ket{\psi_{m+1}},\ket{\psi_{m+2}},\cdots,\ket{\psi_N}\}$ denotes collectively the ancillary states and when required, the fiducial states also. An important limitation of this construction is that the resulting witness does not, in general, certify that the original target set $S$ itself is an AES. Rather, the certification applies to the extended set $\widetilde{S}$ obtained by augmenting $S$ with the ancillary states. Thus, a violation of the corresponding separable bound certifies the existence of absolute entanglement at the level of the extended preparation set $\widetilde{S}$, but does not by itself imply that $S$ is an AES.

The resulting witness is implemented within the prepare-and-measure scenario introduced in Sec.~\ref{Sec_Scenario}, with \(|X|=N\). Bob’s input is a pair, $y=f_{ij}=(i,j),~ i\neq j,$ leading to $|Y|=\binom{N}{2},$ and for each input Bob performs a two outcome measurement $M_{f_{ij}}=\{\Pi^1_{f_{ij}},\Pi^2_{f_{ij}}\}$, \ie , $|B|=2.$
To certify that the extended set $\widetilde{S}$ is an AES, we consider the modified witness
\begin{equation}
\begin{aligned}
W(\widetilde{S})
=
\sum_{1\le j<i\le N}
\alpha_i\alpha_j
{\|P(\psi_i)-P(\psi_j)\|}_1
\Big(
&P(b=2|x=i,y=f_{ij})
\\
-&P(b=2|x=j,y=f_{ij})
\Big)
\leq Q^{\mathrm{SEP}}.
\end{aligned}
\end{equation}

The bound \(Q^{\mathrm{SEP}}\) is chosen such that it is satisfied by all separable preparations. In general, deriving this bound analytically is challenging; therefore, we employ numerical optimization techniques to estimate it reliably. Specifically, we first use alternating convex optimization to obtain a lower bound \(Q^{\mathrm{SEP}}_{\mathrm{LB}}\), followed by a semidefinite-programming relaxation hierarchy to derive an upper bound \(Q^{\mathrm{SEP}}_{\mathrm{UB}}\). The target preparation set \(\widetilde{S}\) achieves the optimal violation value \(1-\frac{1}{d_1d_2}=3/4\), as the witness self-tests the states in \(\widetilde{S}\). Consequently, if an observed violation \(Q\) satisfies $Q^{\mathrm{SEP}}_{\mathrm{LB}}<Q<Q^{\mathrm{SEP}}_{\mathrm{UB}},$ this provides strong numerical evidence for the presence of absolute entanglement in the preparation set. A rigorous certification of AES is obtained whenever $Q>Q^{\mathrm{SEP}}_{\mathrm{UB}}.$

In principle, tighter upper bounds can be obtained by increasing the level of the relaxation hierarchy. However, exploring higher levels is often computationally demanding due to the limited symmetry structure of the witness and the rapid growth in numerical complexity. In the following sections, we demonstrate the applicability of this methodology through several explicit examples of AES constructions.

\subsection{AES I: ETF-Based Construction}
We aim to utilize the symmetry of the self-testing witness to reduce the computational complexity. This leads us to the equiangular tight frame (ETF) being the best suitable candidate for AES certification. A set of $m$ states $S'=\{\ket{\psi_1},\ket{\psi_2},\cdots,\ket{\psi_m}\}$ in the Hilbert space $\mathbb{C}^d$ forms an ETF if they satisfy the following conditions (i) $|\braket{\psi_i|\psi_j}|=c~\forall ~i\neq j$, (ii) $\sum_{i=1}^m\ket{\psi_i}\bra{\psi_i}=\frac{m}{d}\mathbb{I}$ \cite{Welch1974,Sustik2007}. Hence, the self-testing witness used for entanglement certification takes the following symmetric form
\begin{align}\label{ETF1}
    W(S')=&~\tilde{c}\sum_{1\leq i< j\leq m}P(b=2|x=i,y=f_{ij})-P(b=2|x=j,y=f_{ij})\leq Q^{\mathrm{SEP}}\nonumber\\
    =&~\tilde{c}\sum_{1\leq i< j\leq m}\Tr [(\ket{\psi_i}\bra{\psi_i}-\ket{\psi_j}\bra{\psi_j})\Pi_{f_{ij}}^2]\leq Q^{\mathrm{SEP}}
\end{align}
where $\tilde{c}=(2/m^2)\sqrt{1-c^2}$ and $\Pi^2_{f_{ij}}$ is the second POVM element of the measurement $M_{f_{ij}}=\{\Pi^1_{f_{ij}},\Pi^2_{f_{ij}}\}$.

To express the symmetry of the witness, we introduce a binary outcome observable $A_{f_{ij}}$ corresponding to the measurement $M_{f_{ij}}$ as $A_{f_{ij}}=\Pi^2_{f_{ij}}-\Pi^1_{f_{ij}}=2\Pi^2_{f_{ij}}-\mathbb{I}$. The self-testing witness can be expressed in terms of the observable $A_{f_{ij}}$ as
\begin{align*}
    W(S')=&~\frac{\tilde{c}}{2}\sum_{1\leq i< j\leq m}\Tr [(\ket{\psi_i}\bra{\psi_i}-\ket{\psi_j}\bra{\psi_j})A_{f_{ij}}]\leq Q^{\mathrm{SEP}}
\end{align*}
The witness remains unchanged under the following transformation 
\begin{subequations}
\begin{align}
    &\rho_l\longleftrightarrow\rho_m\\
    &A_{f_{lm}}\rightarrow -A_{f_{lm}}~~~~~~~\forall~ l<m, \\
    &A_{f_{li}}\leftrightarrow -A_{f_{im}}~~~~~~~~if ~~l<i<m\\
    &A_{f_{li}}\leftrightarrow A_{f_{mi}}~~~~~~~~~~if ~i>m
\end{align}\label{sym1}
\end{subequations}
We exploit these symmetries to obtain the upper bound $Q^{\mathrm{SEP}}$ corresponding to the separable preparations. The smallest overcomplete ETF in $\mathbb{C}^2\otimes\mathbb{C}^2$ is a regular simplex of five states, but unfortunately this set does not define an AES. As there does not exist any six-state ETF \cite{Wei2024}, we have considered the ETF set consisting of the following seven states
\begin{align}
    S_1&=\{\ket{\psi_j}\}_{j=1}^7\nonumber\\
    \ket{\psi_{j}}&=\frac{1}{2}(\ket{00}+\omega^{j-1}\ket{01}+\omega^{2j-2}\ket{10}+\omega^{4j-4}\ket{11})\\
    &\text{where } ~\omega=e^{2\pi i/7}~,~c=\frac{1}{2\sqrt{2}}\nonumber.
\end{align}
The corresponding witness for the certification of the absolute entanglement in the set is given as 
\begin{align}
    W(S_1)=\frac{1}{7\sqrt{14}}\sum_{1\leq i< j\leq 7}P(b=2|x=i,y=f_{ij})-P(b=2|x=j,y=f_{ij})\leq Q^{\text{SEP}}
\end{align}
The numerical estimation of the separable bound $Q^{\text{SEP}}$ is summarized in Table~\ref{tabETF1}.
\begin{table}[h!]
    \centering
    \begin{tabular}{|c|c|c|c|c|c|c|}
    \hline
        \multirow{2}{*}{Witness} &
        \multirow{2}{*}{Lower bound} &
        \multicolumn{5}{c|}{Upper bound at level}\\
        \cline{3-7}
        & & 2 & 2+$\rho\rho\rho$ & 2+$\rho\rho\rho+\rho\rho\mathrm{M}$ & 2+$\rho\rho\rho+\rho\rho \mathrm{M}+\mathrm{M}\rho\rho$ & 2+$\rho\rho\rho+\rho\rho \mathrm{M}+\rho \mathrm{MM}$\\
        \hline
        $W(S_1)$ & 0.749801 & 3/4 & 0.749952 & 0.749935 & 0.749935 & 0.749935\\
        \hline
    \end{tabular}
    \caption{Numerical lower and upper bounds for the separable witness value associated with \(S_1\). The SDP relaxation hierarchy is implemented using the \texttt{QDimSum} package.}
    \label{tabETF1}
\end{table}
The target states, $S_1$ achieve the optimal value $Q=3/4$, thereby certifying the existence of absolute entanglement in $S_1$. 

\subsection{AES II: SIC POVM based construction - Maximal ETF}

Symmetric informationally complete positive operator-valued measures (SIC POVMs) constitute extremal measurements in which all elements exhibit equal pairwise overlaps. The ETF with maximal elements in $\mathbb{C}^d$ corresponds to the  \(d^2\) subnormalized rank-one operators of SIC POVM. The existence of SIC POVM in higher dimensions has been studied both analytically \cite{Appleby2005,Grassl2005} and numerically \cite{Renes2004,Scott2010}.

Here, we consider the set of 16 states associated with a SIC POVM in the Hilbert space \(\mathbb{C}^2\otimes\mathbb{C}^2\) denoted as $S_2$ \cite{Zhu2010}. Since these states possess equal overlaps, i.e., $|\braket{\psi_i|\psi_j}|=\frac{1}{\sqrt{5}},$ it follows that the corresponding difference satisfies $\|P(\psi_i)-P(\psi_j)\|_1=\frac{4}{\sqrt{5}}$. Moreover, the self-testing normalization condition can be satisfied by choosing $\alpha_i=\frac{1}{16},~\forall i.$  Accordingly, the self-testing witness for the SIC-POVM-based set takes the form
\begin{equation}
W(S_2)
=
\sum_{1\le j<i\le 16}
\left(\frac{1}{16}\right)^2
\frac{4}{\sqrt{5}}
\bigl(
P(b=2|x=i,y=f_{ij})
-
P(b=2|x=j,y=f_{ij})
\bigr)
\leq Q^{\mathrm{SEP}}.
\end{equation}

The numerical estimates of the separable bound \(Q^{\mathrm{SEP}}\) are summarized in Table~\ref{tabAES3}.

\begin{table}[h!]
    \centering
    \begin{tabular}{|c|c|c|c|c|}
    \hline
        \multirow{2}{*}{Witness} &
        \multirow{2}{*}{Lower bound} &
        \multicolumn{3}{c|}{Upper bound at level}\\
        \cline{3-5}
        & & $\rho+\mathrm{M}+\rho \mathrm{M}$ &$\rho+\mathrm{M}+\rho \mathrm{M}+ \rho\rho$& $2+\rho\rho\rho$\\
        \hline
        $W(S_2)$ &  0.748318 & 0.788870 & 0.75 & 0.75\\
        \hline
    \end{tabular}
    \caption{Numerical lower and upper bounds for the separable witness value associated with the SIC-POVM-based AES \(S_2\).}
    \label{tabAES3}
\end{table}
We have utilized the ETF symmetries Eq.~(\ref{sym1}), but still the upper bound does not converge below the quantum optimal value $3/4$ for level $2+\rho\rho\rho$ with \texttt{QDimSum} \cite{Tavakoli2019}. It becomes computationally hard to check for higher levels of the relaxation hierarchy. Still, the \(16\) SIC-POVM states achieve the optimal value \(Q=3/4>0.7483189\equiv Q^{\mathrm{SEP}}_{\text{LB}}\). This small gap provides strong numerical evidence, but not a rigorous certification, of their absolute entanglement in the prepare-and-measure scenario.
\subsection{AES III: Construction from a minimal AES}
We next consider the smallest known AES in
$\mathbb{C}^{2}\otimes\mathbb{C}^{2}$, which consists of four pure
states~\cite{Cai2021}. The corresponding construction forms a
one-parameter family introduced in Eq.~(\ref{smallest_AES}). Here, we focus on the particular instance $c=1/\sqrt{2}$,
\begin{equation}
S_3=
\left\{
\ket{\psi_1}=\ket{0}\otimes\ket{0},
\ket{\psi_2}=\ket{0}\otimes\ket{+},
\ket{\psi_3}=\ket{+}\otimes\ket{0},
\ket{\psi_4}=\frac{1}{\sqrt{2}}(\ket{00}+\ket{11})
\right\}.
\label{eq:smallest_AES_instance}
\end{equation}

To construct the corresponding self-testing witness, we augment the target
set with ancillary preparations such that the resolution-of-the-identity
condition in Eq.~(\ref{condition1}) is satisfied. We choose
\begin{align}
\alpha_i
&=
\frac{1}{6+2\sqrt{7}},
\qquad
i=1,2,3,4,
\nonumber\\
\alpha_5=\alpha_6
&=
\frac{\sqrt{7}-1}{8},
\qquad
\alpha_7
=
\frac{3\sqrt{7}-7}{4},
\label{eq:smallest_AES_coefficients}
\end{align}
together with the ancillary states
\begin{align}
\ket{\psi_5}
&=
\frac{1}{\sqrt{6}}\ket{01}
-\sqrt{\frac{2}{3}}\ket{10}
+\frac{1}{\sqrt{6}}\ket{11},
\nonumber\\
\ket{\psi_6}
&=
\ket{-}\otimes\ket{1},
\nonumber\\
\ket{\psi_7}
&=
a_1\ket{00}
-b_1
\left(
\ket{01}+\ket{10}+\ket{11}
\right),
\label{eq:smallest_AES_ancillary}
\end{align}
where
\begin{equation}
a_1=
\sqrt{\frac{7-2\sqrt{7}}{14}},
\qquad
b_1=
\sqrt{\frac{7+2\sqrt{7}}{42}}.
\label{eq:smallest_AES_ab}
\end{equation}
The resulting extended preparation ensemble is $\widetilde{S}_3=\{\ket{\psi_1},\ldots,\ket{\psi_7}\}.$ Following the general construction introduced above, we associate with
$\widetilde{S}_3$ the witness
\begin{equation}
\begin{aligned}
W(\widetilde{S}_3)
=\sum_{1\leq j<i\leq 7}\alpha_i\alpha_j\left\|P(\psi_i)-P(\psi_j)\right\|_1\Big[P(b=2|x=i,y=f_{ij})-P(b=2|x=j,y=f_{ij})\Big].
\end{aligned}
\label{eq:smallest_AES_witness}
\end{equation}
For four-dimensional preparations, the target realization attains the
maximum quantum value $W(\widetilde{S}_3)=\frac{3}{4}.$
We first estimate the separable benchmark using the alternating see-saw optimization described in the preceding section. This yields the lower bound $Q_{\mathrm{LB}}^{\mathrm{SEP}}(\widetilde{S}_3)=0.74990679.$
The target value therefore exceeds the best separable value found by the
see-saw procedure by $\frac{3}{4}-Q_{\mathrm{LB}}^{\mathrm{SEP}}(\widetilde{S}_3)\simeq9.32\times10^{-5}.$

For the previous examples, the upper bounds were obtained by exploiting the symmetry structure of the corresponding witnesses within the \texttt{QDimSum} implementation of the SDP hierarchy. In contrast, the present witness does not exhibit the same symmetry structure, and the resulting SDP becomes computationally prohibitive within our current implementation. We are therefore unable to obtain a sufficiently tight upper bound on the separable value for this example.

Consequently, the present numerical analysis does not provide a rigorous semi-device-independent certification of the target AES $\widetilde{S_3}$. Nevertheless, the see-saw optimum lies extremely close to the maximum quantum value $3/4$, leaving only a small unresolved interval $0.74990679\leq Q^{\mathrm{SEP}}(\widetilde{S}_3)\leq\frac{3}{4}.$ Closing this gap requires a tighter SDP relaxation to obtain an upper bound on the separable value.

\subsection{AES IV: MUB-based construction}
We next consider a construction based on mutually unbiased bases (MUBs). Two orthonormal bases
\begin{equation*}
\Phi=\{\ket{\phi_1},\ket{\phi_2},\ldots,\ket{\phi_d}\},
\qquad
\Psi=\{\ket{\psi_1},\ket{\psi_2},\ldots,\ket{\psi_d}\},
\end{equation*}
of a $d$-dimensional Hilbert space are mutually unbiased if
\begin{equation}
\left|\braket{\phi_i|\psi_j}\right|^2=\frac{1}{d},\qquad\forall\,i,j.
\label{eq:MUB_definition}
\end{equation}

MUBs provide a natural construction of AESs because only a limited number
of mutually unbiased bases can simultaneously consist entirely of product
states with respect to a fixed tensor-product structure. The maximum number of product MUBs in the multipartite Hilbert space $\mathbb{C}^{d_1}\otimes\mathbb{C}^{d_2}\otimes\cdots\otimes\mathbb{C}^{d_n}$ is given by $p=\min_j M_j,$ where \(M_j\) denotes the number of MUBs in \(\mathbb{C}^{d_j}\) \cite{McNulty2016}. In particular, for the bipartite Hilbert space $\mathbb{C}^{2}\otimes\mathbb{C}^{2}$, at most three mutually unbiased bases can simultaneously be product bases. Since mutual unbiasedness is preserved under a common global unitary transformation, any collection of four MUBs in this space cannot be mapped simultaneously to product bases. The union of the states belonging to such four bases therefore constitutes an AES.

We consider the following set of four MUBs in
$\mathbb{C}^{2}\otimes\mathbb{C}^{2}$:
\begin{align}
S_4=\bigg\{&
\ket{00},\ket{01},\ket{10},\ket{11},
\ket{++},\ket{+-},\ket{-+},\ket{--},
\ket{y_0y_0},\ket{y_0y_1},
\ket{y_1y_0},\ket{y_1y_1},
\nonumber\\
&
\frac{1}{\sqrt{2}}
\left(\ket{y_0+}+i\ket{y_1-}\right),
\frac{1}{\sqrt{2}}
\left(\ket{y_0+}-i\ket{y_1-}\right),
\frac{1}{\sqrt{2}}
\left(\ket{y_1+}+i\ket{y_0-}\right),
\frac{1}{\sqrt{2}}
\left(\ket{y_1+}-i\ket{y_0-}\right)
\bigg\},
\label{eq:MUB_AES}
\end{align}
where
\begin{equation}
\ket{\pm}
=
\frac{\ket{0}\pm\ket{1}}{\sqrt{2}},
\qquad
\ket{y_{0/1}}
=
\frac{\ket{0}\pm i\ket{1}}{\sqrt{2}}.\nonumber
\end{equation}
The first three bases consist entirely of product states, whereas the
fourth contains entangled states. Since $S_4$ is the union of four orthonormal bases, no additional preparations are required to satisfy the resolution-of-the-identity condition introduced in Eq.~(\ref{condition1}). Indeed, choosing $\alpha_i=\frac{1}{16},
\quad \forall i=1,\ldots,16,$ gives
\begin{align}
\sum_{i=1}^{16}
\alpha_i\ket{\psi_i}\bra{\psi_i}
&=
\frac{1}{16}
\sum_{k=1}^{4}
\sum_{i=1}^{4}
\ket{\psi_i^{(k)}}\bra{\psi_i^{(k)}}=
\frac{1}{16}
\sum_{k=1}^{4}\mathbb{I}_4
=
\frac{\mathbb{I}_4}{4},
\label{eq:MUB_completeness}
\end{align}
as required for a four-dimensional preparation space. Following the general construction introduced above, we associate with
$S_4$ the witness
\begin{equation}
\begin{aligned}
W(S_4)=\sum_{1\leq j<i\leq16}\left(\frac{1}{16}\right)^2\left\|P(\psi_i)-P(\psi_j)\right\|_1\Big[&P(b=2|x=i,y=f_{ij})\nonumber\\
-&P(b=2|x=j,y=f_{ij})\Big].
\end{aligned}
\label{eq:MUB_AES_witness}
\end{equation}
For four-dimensional quantum preparations, the target realization attains the maximum value $W(S_4)=\frac{3}{4}.$ We estimate the separable value using the alternating see-saw optimization described in Sec.~\ref{SEESaw_section}. Since the preparations are restricted to $\mathbb{C}^{2}\otimes\mathbb{C}^{2}$, the PPT criterion provides an exact characterization of separability in each preparation step. The optimization yields $Q_{\mathrm{LB}}^{\mathrm{SEP}}(S_4)=0.748680.$ Thus, the target value exceeds the largest separable value found by the see-saw procedure by $\frac{3}{4}-Q_{\mathrm{LB}}^{\mathrm{SEP}}(S_4)=1.32\times10^{-3}.$ For all the preceding examples, we obtained upper bounds by exploiting symmetries of the corresponding preparation ensembles within the \texttt{QDimSum} implementation of the SDP relaxation hierarchy. The present MUB-based witness, however, does not exhibit sufficient exploitable symmetry within our current implementation, making the corresponding SDP computationally demanding. We therefore do not obtain a sufficiently tight upper bound on the separable value for this example.

Consequently, the present calculation and the violation of the see-saw separable bound provide heuristic evidence for the existence of absolute entanglement in the set $S_4$ within the PM scenario. However, this does not by itself provide a rigorous semi-device-independent certification of the AES $S_4$. Instead, it constrains the separable optimum to the interval $0.748680 \leq Q^{\mathrm{SEP}}(S_4) \leq\frac{3}{4},$ where the upper endpoint follows from the maximum quantum value of the witness. Closing this gap requires a tighter numerical relaxation to obtain the upper bound on the separable value.

\section{Discussion}
We have developed a semi-device-independent framework for certifying absolutely entangled sets (AESs) in a prepare-and-measure scenario. The primary objective of this work is to provide an operational approach to the certification of AESs, complementary to their conventional state-dependent characterization. To this end, we construct prepare-and-measure correlation witnesses and characterize their optimal values over separable preparations. We employ alternating see-saw optimization to obtain lower bounds on the separable value and introduce a new semidefinite-programming (SDP) relaxation hierarchy to derive corresponding upper bounds. In particular, violation of the separable bound associated with our RAC-based witness certifies that the preparation ensemble constitutes an AES. The converse, however, does not hold in general, since some AESs may fail to violate this particular witness. To address this limitation, we further develop a general methodology for constructing certification witnesses tailored to extended state ensembles containing arbitrary target pure-state AESs, based on the prepare-and-measure self-testing construction of Ref.~\cite{Navascues2023}. Entanglement is a fundamental resource in quantum information theory, and correlations generated from entangled states enable advantages in a wide range of information-processing tasks. In the present prepare-and-measure setting, however, entanglement of the individual preparations is not by itself sufficient to distinguish their correlations from those obtainable with separable preparations. In particular, preparation ensembles that are not absolutely entangled can be transformed, by an appropriate global unitary, into separable ensembles, and the corresponding unitary freedom can be absorbed into the measurement device. As a consequence, their observable correlations cannot in general be distinguished from those generated by separable preparations. By contrast, for suitable witnesses, AESs can generate correlations beyond the region accessible with separable preparations. This provides the operational basis for the certification framework developed here.

Previous studies of AESs have primarily focused on their structural and geometric properties, including the construction of explicit AESs and the characterization of conditions under which a collection of states remains entangled under arbitrary global unitary transformations ~\cite{Cai2021,Yu2021,Li2020arxiv}. Our approach instead addresses the operational certification of this property from observed prepare-and-measure correlations. The preparation and measurement devices are otherwise treated as uncharacterized, with the only assumption being a bound on the Hilbert-space dimension of the communicated system. The resulting certification is therefore semi-device-independent and does not require state tomography or a detailed characterization of the measurement apparatus. A central technical ingredient of our framework is the determination of separable bounds for dimension-constrained prepare-and-measure witnesses. Such optimization problems arise more broadly in quantum communication whenever one seeks to distinguish correlations generated by separable preparations from unrestricted quantum correlations. Related SDP relaxation techniques have previously been employed to bound correlations generated by separable states in network communication scenarios~\cite{RochiCarceller2026}. In the present work, we introduce a modified dimension-constrained moment-matrix construction tailored to separable preparations. By explicitly retaining the local operator structure of product preparations, the resulting SDP relaxations provide upper bounds on the correlations compatible with separable preparation strategies. Together with feasible values obtained from see-saw optimization, this allows the optimal separable value to be constrained from both sides.

Although the witness-construction framework applies to arbitrary bipartite systems $\mathbb{C}^{d_1}\otimes\mathbb{C}^{d_2}$, our numerical investigations have focused on the two-qubit case $\mathbb{C}^{2}\otimes\mathbb{C}^{2}$. Extending the numerical analysis to higher-dimensional systems therefore constitutes a natural direction for future work. Moreover, for the constructions based on the smallest AESs and MUBs considered here, the limited exploitable symmetry of the corresponding witnesses makes the higher levels of the SDP relaxation computationally demanding, preventing us from obtaining sufficiently tight separable upper bounds with the present implementation. Developing stronger or more efficient SDP relaxations, improved symmetry-reduction techniques, or analytical methods for bounding the separable value would therefore be valuable. More broadly, our framework suggests that absolute entanglement can be viewed not only as a geometric property of a collection of states under global unitary transformations, but also as an operational resource that can be revealed through prepare-and-measure correlations. This perspective motivates the investigation of AESs in other prepare-and-measure tasks, network communication scenarios, and higher-dimensional or multipartite settings.

\section{Acknowledgements}
We thank Jef Pauwels and Nicolas Brunner for useful discussions. We acknowledge support from the European Union (CHIST-ERA MoDIC) and from the National Research, Development and Innovation Office NKFIH (Grant Nos.~2023-1.2.1-ERA\_NET-2023-00009 and K145927).
%

\end{document}